\DeclareMathAlphabet\scr{U}{scr}{m}{n}
\DeclareMathAlphabet\scr{U}{scr}{m}{n}
\SetMathAlphabet\scr{bold}{U}{scr}{b}{n}
\DeclareFontFamily{U}{scr}{\skewchar\font'177}
\DeclareFontShape{U}{scr}{m}{n}{<-6>rsfs5<6-8>rsfs7<8->rsfs10}{}
\DeclareFontShape{U}{scr}{b}{n}{<-6>rsfs5<6-8>rsfs7<8->rsfs10}{}
\newtheorem{theorem}{Theorem}[section]
\newtheorem{corollary}[theorem]{Corollary}
\newtheorem{definition}[theorem]{Definition}
\newtheorem{lemma}[theorem]{Lemma}
\newtheorem{proposition}[theorem]{Proposition}
\newtheorem{remark}[theorem]{Remark}
\numberwithin{equation}{section}
\theoremstyle{plain}
\newtheorem{assumption}[theorem]{Assumption}
\def\softl{l\kern-0.35ex\raise0.1ex\hbox{'}\kern-0.15ex}
\newcommand{\dom}{\operatorname{dom}}
\newcommand{\sq}{\mathbin{\square}}
\newcommand{\R}{\mathbb{R}}
\newcommand{\E}{\mathrm{E}}
\newcommand{\Var}{\mathrm{Var}}
\newcommand{\m}{\mathrm{m}}
\newcommand{\MV}{\textsc{MV}}
\newcommand{\MMV}{\textsc{MMV}}
\newcommand{\SR}{\mathrm{SR}}
\renewcommand{\P}{\textsf{\upshape P}}
\newcommand{\Qu}{\textsf{\upshape Q}}
\renewcommand{\d}{\mathrm{d}}
\newcommand{\indicator}[1]{\mathbf{1}_{#1}}
\let\oldr@@t\r@@t
\def\r@@t#1#2{%
\setbox0=\hbox{$\oldr@@t#1{#2\,}$}\dimen0=\ht0
\advance\dimen0-0.2\ht0
\setbox2=\hbox{\vrule height\ht0 depth -\dimen0}%
{\box0\lower0.4pt\box2}}
\LetLtxMacro{\oldsqrt}{\sqrt}
\renewcommand*{\sqrt}[2][\ ]{\oldsqrt[#1]{#2}}
\begin{document}

\title[Semimartingale Monotone Mean--Variance Portfolio Theory]{Semimartingale Theory of Monotone\\
Mean--Variance Portfolio Allocation}

\author{Ale\v{s} \v{C}ern\'{y}}

\address{Ale\v{s} \v{C}ern\'{y}\\
  Cass Business School\\
	City, University of London}

\email{ales.cerny.1@city.ac.uk}

\thanks{I would like to thank Fabio Maccheroni and Sara Biagini for many helpful discussions. I am also grateful to two anonymous referees for their comments.}

\subjclass[2010]{Primary: 05C38, 15A15; Secondary: 05A15, 15A18}
\keywords{monotone mean--variance; monotone Sharpe ratio; free cash-flow stream}
\dedicatory{Cass Business School, City, University of London}

\begin{abstract}
We study dynamic optimal portfolio allocation for monotone mean--variance preferences
in a general semimartingale model. Armed with new results in this area we revisit the work of \citet{cui.al.12} 
and fully characterize the circumstances under which one can set aside a non-negative cash flow while simultaneously improving the mean--variance efficiency of the left-over wealth. The paper analyzes, for the first time, the monotone hull of the Sharpe ratio
and highlights its relevance to the problem at hand.
\end{abstract}

\maketitle

\section{Introduction}

In a recent work \citet{cui.al.12} study an interesting situation where one can enhance the performance of a dynamically mean--variance efficient portfolio by setting aside a non-negative cash flow without lowering the Sharpe ratio of the remaining wealth distribution. Noting that mean--variance (MV) preferences are not time-consistent, \citet{cui.al.12} devise a new concept called \emph{time consistency in efficiency} which allows one to judge whether one can extract a  `free cash-flow stream' (FCFS) without affecting the efficiency of the mean--variance portfolio allocation. Their analysis is performed mostly in discrete time with square-integrable price processes whose returns are independent, but the authors also note that FCFS extraction is not possible in a continuous-time lognormal diffusion model. In subsequent work \citet{bauerle.grether.15} indicate that the FCFS extraction is not possible in any (suitably defined) complete market. \citet{trybula.zawisza.19} reach an identical conclusion in a specific (incomplete) diffusion setting. 

In this paper we approach the same subject along more classical lines to deepen the foregoing analysis both mathematically and conceptually. First, we consider a general semimartingale model  with only a mild $\sigma$--local square integrability condition on the price process. 
Second, we note that the extraction of FCFS for MV preferences can be fully understood by studying a simpler time-consistent expected utility maximization, which shows that any link between the existence of FCFS and time inconsistency of MV preferences is accidental.

Our strategy, in the first instance, is to link the existence of FCFS to portfolio maximization with monotone mean--variance preferences (MMV). In the second step we exploit a connection between the MMV preferences and the truncated quadratic utility whose individual ingredients have appeared in the work of \citet{cerny.03}, \citet{bental.teboulle.07}, \citet{filipovic.kupper.07}, \citet{maccheroni.al.09}, 
 and  \citet{cerny.al.12}. We provide a novel and systematic treatment of this connection which is of independent interest. As a by-product we then obtain an extension of the monotone mean--variance optimal portfolio analysis of \citet{maccheroni.al.09} to semimartingale trading.

In the general semimartingale setting outlined above we prove that it is possible to extract an FCFS while maintaining MV efficiency if and only if one can extract an FCFS and \emph{strictly improve} MV efficiency (Theorem~\ref{T:3}). Our work explicitly characterizes the upper limit of MV efficiency gain in terms of the monotone hull of the Sharpe ratio (SR). In Proposition~\ref{P:SRm} the monotone SR is shown to coincide, on an appropriate set, with the \emph{arbitrage-adjusted Sharpe ratio} of \citet{cerny.03}.

\section{Mathematical setup}

\subsection{Monotone mean--variance preferences}
Fix a time horizon $T>0$. We shall work on a filtered probability space $(\Omega,(\scr{F}_t)_{t\in [0,T]},\P)$ with $\scr{F}_0$ trivial. We write $L^p$ for $L^p(\Omega,\mathcal{F}_T,\P)$ with $p\in[0,\infty]$ and $L^p_+$ for the set of non-negative random variables in $L^p$. All probabilistic statements hold `$\P$--almost surely'.

Let $U:\R\to\R$ be the normalized quadratic utility
$$U(x)=x-x^2/2.$$
Define the expected utility functional $F:L^{0}\rightarrow \mathbb{[-\infty },\infty )$ by 
\begin{equation*}
F(X)=\E[U(X)]. 
\end{equation*}
Observe that $F$ is a proper and concave function on $L^{0}$. The \emph{effective domain} of a concave function $f$ on $L^{0}$ is defined in the standard way as
$$\dom f = \{X\in L^{0}\mid f(X)>-\infty \}.$$ 
In particular, we obtain $\dom F=L^{2}$.

Next, denote by $F_{\m}$ and $F_{\MV}$ the monotone and the cash-invariant hull of $F$, respectively, cf. 
\citet[Section 4]{filipovic.kupper.07},
\begin{align}
F_{\m}(X) &= \sup_{ Y\in L^0_+} F(X-Y), \label{eq:181213.1}\\
F_{\MV}(X) &= \sup_{c\in \R} \left\{ F(X-c)+c \right\}.    \label{eq:181213.2}
\end{align}
The easy proof of the next lemma is omitted.
\begin{lemma}
Functionals $F_{\m}$, $F_{\MV}$  are concave and proper on $L^0$ with
\begin{align*}
\dom F_{\m} &=L^0_+ - L^2_+,\\
\dom F_{\MV} &= L^2.
\end{align*}
Moreover, on their effective domains $F_m$ and $F_{\MV}$ obey the identities
\begin{align}
F_{\m}(X) &=\E[X\wedge 1] - \E[(X\wedge 1)^2]/2,\label{eq:Fm}\\
F_{\MV}(X) &= \E[X] -\Var(X)/2.\label{eq:Fc}
\end{align}
\end{lemma}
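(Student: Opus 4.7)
The plan is to reduce each supremum to a pointwise (or deterministic) optimization and then harvest concavity, properness, and the effective domain from the resulting closed-form expressions. Concavity of $F_{\m}$ and $F_{\MV}$ follows from the joint concavity of $(X,Y)\mapsto F(X-Y)$ and $(X,c)\mapsto F(X-c)+c$ under partial supremum. Properness then reduces to exhibiting one finite value and an upper bound: the trivial choice $X=0$ gives $F_{\m}(0)=F_{\MV}(0)=0$, while the pointwise bound $U\le 1/2$ forces $F_{\m}\le 1/2$, and the explicit formula \eqref{eq:Fc} yields $F_{\MV}<\infty$ on $L^2$.

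For the cash-invariant hull, on $X\in L^2$ the algebraic rearrangement
\[
F(X-c)+c=\E[X]-\Var(X)/2-(\E[X]-c)^2/2
\]
exhibits a concave quadratic in $c$ maximized at $c=\E[X]$, giving \eqref{eq:Fc}. When $X\notin L^2$, $(X-c)^2$ fails to be integrable for every $c\in\R$, so $F(X-c)=-\infty$ and hence $F_{\MV}(X)=-\infty$; this proves $\dom F_{\MV}=L^2$.

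For the monotone hull, the key deterministic observation is that for each $x\in\R$ the map $y\mapsto U(x-y)$ on $y\ge 0$ is maximized at $y^{*}=(x-1)_+$ with value $U(x\wedge 1)$, since $U$ is increasing on $(-\infty,1]$ and decreasing on $[1,\infty)$. Applied $\omega$-wise this yields $U(X-Y)\le U(X\wedge 1)$ for every $Y\in L^0_+$, with equality at $Y=(X-1)_+\in L^0_+$; taking expectations produces \eqref{eq:Fm} whenever the right-hand side is defined. Since $X\wedge 1\le 1$, the only source of $-\infty$ is the $-(X\wedge 1)^2/2$ term, and the crude bound $(X\wedge 1)^2\le 1+X_-^2$ reduces finiteness to $X_-\in L^2$. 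The small amount of bookkeeping that remains is the identification $\{X:X_-\in L^2\}=L^0_+-L^2_+$: the inclusion $\supseteq$ uses that $X=Z-W$ with $Z\ge 0$ and $W\in L^2_+$ forces $X_-\le W$, while the reverse decomposes $X=X_+-X_-$ with $X_+\in L^0_+$ and $X_-\in L^2_+$. This final set identification is the main---but mild---obstacle, which is presumably why the authors declare the proof easy and omit it.
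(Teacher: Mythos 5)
The paper explicitly omits the proof of this lemma, so there is nothing to compare against; your argument is correct and complete, and it is surely the intended one. The three ingredients --- the pointwise optimization $\sup_{y\geq 0}U(x-y)=U(x\wedge 1)$ with measurable optimizer $Y^*=(X-1)^+\in L^0_+$ (which legitimizes exchanging supremum and expectation), the completion of the square in $c$ for the cash-invariant hull, and the identification $\{X\mid X^-\in L^2\}=L^0_+-L^2_+$ --- are exactly what is needed.
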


Finally, denote by $F_{\MMV}$ the monotone hull of the mean--variance preference,
\begin{equation}\label{eq:Fcm}
F_{\MMV}(X) =\sup_{Y\in L_{+}^{0}} F_{\MV}(X-Y).
\end{equation}
Observe that $F_{\MMV}$ restricted to $L^2$ is precisely the monotone mean--variance preference of \citet{maccheroni.al.09}. In our seting the effective domain of the monotonization is naturally somewhat larger,
$$ \dom F_{\MMV} = \dom F_{\m} = L^0_+ - L^2_+.$$

\subsection{Price processes and admissible strategies}
We assume there are $d\in\mathbb{N}$ risky assets and a risk-free bond with constant value $1$. For more details concerning the next assumption see \citet[Section~2.4]{biagini.cerny.11}.
\begin{assumption}\label{A:S}
The prices of risky assets are modelled by an $\R^d$--valued $\sigma$--locally square-integrable semimartingale $S$.
\end{assumption} 
Recall the definition of an absolutely continuous \emph{signed} $\sigma$--martingale measure for $S$ in \citet[Definition~2.3]{cerny.kallsen.07}. Denote the totality of such signed measures $\mathcal{M}^s$ and the subsets containing only absolutely continuous (resp. equivalent)  probability measures by $\mathcal{M}^a$ (resp. $\mathcal{M}^e$). Finally, for $l\in \{s,a,e\}$ define 
$$ \mathcal{M}^{l}_2 = \{\Qu\in \mathcal{M}^{l}\mid \d\Qu/\d\P \in L^2\}.$$
\begin{definition}\label{D:1}
We say that $\vartheta\in L(S,\P)$ is a tame strategy, writing $\vartheta\in \scr{T}$, if
$$\sup_{t\in[0,T]}\lvert\vartheta\cdot S_t\rvert \in L^2.$$
We say that $\vartheta\in L(S,\P)$ is an admissible strategy
\begin{itemize}
\item[---]  for the preference $F$, writing $\vartheta\in\scr{A}$, 
if $\vartheta\cdot S$ is a $\Qu$--martingale for every $\Qu\in\mathcal{M}_2^s$;
\item[---] for the preference $F_{\m}$, writing $\vartheta\in\scr{A}_{\m}$, 
if $\vartheta\cdot S$ is a $\Qu$--supermartingale for every $\Qu\in\mathcal{M}_2^a$.
\end{itemize}
\end{definition}
\noindent In this context we remark that the notion of admissibility in \citet{bauerle.grether.15}
is unsatisfactory because it does not rule out doubling strategies, and
therefore arbitrage, in continuous-time models. In particular, in their setting the Black--Scholes model is not arbitrage-free, see \citet[Section~6]{harrison.kreps.79}.

We will work under the following no-arbitrage assumption, see also \citet[Assumption~2.1]{cerny.kallsen.07}.
\begin{assumption}\label{A:M}
$\mathcal{M}_2^e$ is not empty.
\end{assumption}
\begin{theorem}\label{T:1}
Assume~\ref{A:S} and \ref{A:M}. For every $x\in\R$ one then has 
\begin{alignat}{3}
u(x) &= \sup_{\vartheta\in\scr{T}} F(x+\vartheta\cdot S_T) &&= \max_{\vartheta\in\scr{A}}F(x+\vartheta\cdot S_T) &&= 
\frac{1}{2}
-\frac{1}{2}\frac{\left(1-x\right)^2}{\min\limits_{\Qu\in\mathcal{M}^s_2}\E\left[\left(\frac{\d\Qu}{\d\P}\right)^2\right]},\label{ux}\\
u_{\m}(x) &=\sup_{\vartheta\in\scr{T}} F_{\m}(x+\vartheta\cdot S_T) &&= \max_{\vartheta\in\scr{A}_{\m}}F_{\m}(x+\vartheta\cdot S_T) &&=  
\frac{1}{2}-\frac{1}{2}\frac{\left((1-x)^+\right)^2}{\min\limits_{\Qu\in\mathcal{M}^a_2}\E\left[\left(\frac{\d\Qu}{\d\P}\right)^2\right]}.\nonumber 
\end{alignat}
\end{theorem}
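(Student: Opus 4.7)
The plan is to prove both identities by convex duality, treating the monotone statement as a truncation of the quadratic one. For each utility I would (i) derive the upper bound via the Fenchel--Young inequality combined with the (super)martingale property of $\vartheta\cdot S$ under $\Qu\in\mathcal{M}_2^l$, (ii) verify that $\scr{T}$ embeds in the relevant admissibility class so the two suprema coincide, and (iii) attain the bound by invoking the existence and structure of the variance-optimal martingale measure.

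For the quadratic part, the concave conjugate of $U$ is $\tilde U(z)=(1-z)^2/2$. Taking $Z=\d\Qu/\d\P$ for $\Qu\in\mathcal{M}_2^s$ and $\vartheta\in\scr{A}$, the Fenchel--Young inequality combined with $\E[Z(\vartheta\cdot S_T)]=0$ gives, for every $y\in\R$,
\begin{equation*}
F(x+\vartheta\cdot S_T)\le \E[\tilde U(yZ)] + xy = \tfrac{1}{2}+y(x-1)+\tfrac{y^2}{2}\E[Z^2],
\end{equation*}
and optimising over $y$ yields $F(x+\vartheta\cdot S_T)\le\tfrac{1}{2}-\tfrac{1}{2}(1-x)^2/\E[Z^2]$. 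Taking the infimum of $\E[Z^2]$ over $\Qu\in\mathcal{M}_2^s$ then produces the claimed formula as an upper bound on $\sup_{\vartheta\in\scr{A}}F$. The inclusion $\scr{T}\subseteq\scr{A}$ is a Cauchy--Schwarz matter: for $\vartheta\in\scr{T}$ and $\Qu\in\mathcal{M}_2^s$ one has $\E[|Z|\sup_{t\le T}|\vartheta\cdot S_t|]<\infty$, so $\vartheta\cdot S$ is a $\Qu$-$\sigma$-martingale of class (D) and hence a true $\Qu$-martingale.

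Attainment rests on the existence and structure of the variance-optimal measure. Under Assumptions~\ref{A:S}--\ref{A:M} the set $\{\d\Qu/\d\P:\Qu\in\mathcal{M}_2^s\}$ is a non-empty closed convex subset of $L^2(\P)$, so the minimiser $Z^*$ of $\E[Z^2]$ exists, and by \citet{cerny.kallsen.07} the optimal wealth $1-y^*Z^*$ with $y^*=(1-x)/\E[(Z^*)^2]$ is realised by a tame strategy $\vartheta^*\in\scr{T}\subseteq\scr{A}$, which turns all inequalities above into equalities. For $u_{\m}$ the argument runs in parallel with $U_{\m}(x)=(x\wedge 1)-(x\wedge 1)^2/2$; a direct computation gives $\tilde U_{\m}(z)=(1-z)^2/2$ for $z\ge 0$ and $+\infty$ otherwise, which forces $yZ\ge 0$ and hence mandates the use of $\mathcal{M}_2^a$ (giving $Z\ge 0$) and of $y\ge 0$ in the optimisation. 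The supermartingale property imposed on $\scr{A}_{\m}$, combined with $y\ge 0$, preserves the correct direction of the Fenchel inequality, and the constraint $y\ge 0$ introduces the truncation $(1-x)^+$ at the optimisation step. Attainment follows analogously from the variance-optimal element of $\mathcal{M}_2^a$.

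The principal obstacle is the attainment step: $L^2(\P)$-closedness of the martingale density sets, the integral representation of the variance-optimal densities, and the tameness of the resulting optimiser. These are delicate consequences of the $\sigma$-local square integrability of $S$ and the no-arbitrage condition, and they are precisely what the semimartingale framework of \citet{cerny.kallsen.07} is designed to supply.
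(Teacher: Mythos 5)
Your proposal takes a genuinely different route from the paper, which proves Theorem~\ref{T:1} purely by citation: the quadratic case is delegated to \citet[Lemma~2.4]{cerny.kallsen.07} and the monotone case to Theorem~2.1 and Propositions~3.5 and 5.3 of \citet{biagini.cerny.20}. What you have written is essentially a correct reconstruction of the duality skeleton that those references implement. Your conjugate computations are right: $\tilde U(z)=(1-z)^2/2$, and $\tilde U_{\m}(z)=(1-z)^2/2$ for $z\ge 0$ with $\tilde U_{\m}=+\infty$ on $z<0$, which is exactly what forces the passage from $\mathcal{M}_2^s$ to $\mathcal{M}_2^a$ and the restriction $y\ge 0$, whence the truncation $((1-x)^+)^2$. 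The Cauchy--Schwarz argument for $\scr{T}\subseteq\scr{A}\cap\scr{A}_{\m}$ is also the standard one. You correctly isolate where the real work lies ($L^2$-closedness of the density sets, existence and integral representation of the variance-optimal density), and these are indeed exactly what the cited papers supply. The trade-off is transparency versus length: your version makes the mechanism visible, the paper's version outsources it.

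There is one concrete inaccuracy in your attainment step: you assert that the optimal wealth $1-y^*Z^*$ is ``realised by a tame strategy $\vartheta^*\in\scr{T}\subseteq\scr{A}$.'' In general it is not. The representing strategy for the variance-optimal density is admissible in the sense of Definition~\ref{D:1} but need not satisfy $\sup_{t\le T}\lvert\vartheta\cdot S_t\rvert\in L^2$; this is precisely why the theorem writes $\sup_{\vartheta\in\scr{T}}$ but $\max_{\vartheta\in\scr{A}}$ (resp. $\max_{\vartheta\in\scr{A}_{\m}}$), and why the surrounding discussion in the paper insists on a careful notion of admissibility. As written, your argument establishes the upper bound over $\scr{A}$ and attainment in $\scr{A}$, but the equality $\sup_{\vartheta\in\scr{T}}F(x+\vartheta\cdot S_T)=\max_{\vartheta\in\scr{A}}F(x+\vartheta\cdot S_T)$ still needs an approximation step showing that tame strategies reach the optimal value in the limit (e.g.\ via the $L^2$-closure of $\{\vartheta\cdot S_T\mid\vartheta\in\scr{T}\}$, which is again the content of the cited results). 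If you replace ``realised by a tame strategy'' with ``realised by an admissible strategy and approximated by tame ones,'' the architecture of your proof matches what the references actually prove.
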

\begin{proof}
The first statement follows from \citet[Lemma~2.4]{cerny.kallsen.07}. The second statement follows from Theorem~2.1 and Propositions~3.5 and 5.3 in \citet{biagini.cerny.20} specialized to $L^{\hat{U}}\sim L^2$ with the utility function 
$$x\mapsto x\wedge 1-(x\wedge 1)^2/2.$$ 
Although Definition~\ref{D:1} is a little narrower than the definition of tame strategies in \citet[Definition~5.1]{biagini.cerny.20}, the set of separating measures with density in $L^2$ remains the same, namely $\mathcal{M}_2^a$, and all arguments in \citet{biagini.cerny.20} go through. See also \citet[Proposition~6.4]{biagini.cerny.11}.
\end{proof}

\section{New characterization of monotone mean--variance preferences}

Define a concave  `cash indicator function' $C:L^{0}\rightarrow [-\infty ,\infty )$, 
\begin{equation*}
C(X)=\left\{ 
\begin{array}{ccl}
c&&\text{for }X=c,\ c\in \mathbb{R}; \\ 
-\infty&& \text{otherwise}.
\end{array}
\right. 
\end{equation*}
Let $D:L^{0}\rightarrow [-\infty ,\infty )$ denote the concave indicator function of the positive cone $L^0_+$,
\begin{equation*}
D(X)=\left\{ 
\begin{array}{ccl}
0&&\text{for }X\in L^0_+ ;\\ 
-\infty&& \text{otherwise.}
\end{array}%
\right. 
\end{equation*}

Let $f$, $g$ be two concave,  proper functions on $L^{0}$. With \citet{rockafellar.70} we define the supremal convolution of $f$ and $g$
\begin{equation*}
\left( f\sq g\right) (Z)=\sup \left\{ f(X)+g(Y)\mid X+Y=Z\right\}.
\end{equation*}
One easily verifies that \eqref{eq:181213.1} means $F_{\m} = F\sq D$ and \eqref{eq:181213.2} means $F_{\MV} = F\sq C$.

The key mathematical observation is that supremal convolution is a commutative and associative operation, so that we obtain, with no additional effort,
\begin{equation}\label{eq: CD}
\begin{CD}
F @>\sq\,D>> F_{\m}\\
@VV \sq\,C V @VV\sq\,C V\\
F_{\MV}@>\sq\,D>>F_{\MMV}
\end{CD}\qquad.\medskip
\end{equation}
Let us summarize the lessons from the commutative diagram \eqref{eq: CD}.
\begin{enumerate}
\item $F$ is the expected quadratic utility with $U(x)=x-x^{2}/2$ ;

\item $F_{\m}$  is the expected truncated quadratic utility, see \eqref{eq:Fm}, with 
\begin{align}\label{eq:Um}
U_{\m}(x) =x\wedge 1-\frac{\left( x\wedge 1\right) ^{2}}{2}=\frac{1-\left( \left( x-1\right) \wedge 0\right) ^{2}}{2};
\end{align}

\item $F_{\MV}$  is the mean--variance preference, see \eqref{eq:Fc};

\item The monotone mean--variance preference is classically computed by starting in the top left corner of the diagram \eqref{eq: CD} and then moving anti-clockwise: down and to the right. The resulting formula $F_{\MMV}=F_{\MV}\sq D$ corresponds to equation \eqref{eq:Fcm}. If instead one proceeds from the top left of diagram \eqref{eq: CD} by going clockwise, i.e., first to the right and then down, one obtains a seemingly different but equivalent expression $F_{\MMV}=F_{\m}\sq C$. Explicitly, this new formula reads 
\begin{equation}\label{eq:Fmc}
F_{\MMV}(X) =\sup_{c\in \R} \left\{\E\left[U_{\m}(X-c) \right]+c\right\},\qquad X\in L^0_+ - L^2_+ .
\end{equation}
\end{enumerate}

Let us first address the economic significance of formula \eqref{eq:Fmc}. It shows that maximization of monotone mean--variance preferences is essentially just maximization of time-consistent expected utility where at the outset one pre-commits to the correct level of $c$. This level is given by formula \eqref{eq:cm}, hence it is always non-negative and obtainable from the same expected utility maximization with $c=0$. Non-negative $c$ has the effect of increasing investor's local risk-aversion compared to $c=0$.

The existing literature characterizes monotone mean--variance preferences mostly by the formula \eqref{eq:Fcm} or its variational counterpart (both restricted to $L^2$, see \citealp{maccheroni.al.09}, equations 2.3--2.4)
\begin{equation*}
\left. F_{\MMV}\right\vert_{L^2}(X)=\inf \left\{ \E[ZX]+\Var(Z)/2\mid Z\in L_{+}^{2},\E[Z]=1\right\}.
\end{equation*}
In a separate strand, \citet[Proposition~2.1 and Theorem~4.2]{bental.teboulle.07} notice a link between variational preferences and cash-invariant hull of expected utility, which in their work is called the \emph{optimized certainty equivalent}. \citet[Theorem~7]{cerny.al.12} use this link to prove the formula \eqref{eq:Fmc} restricted to $L^\infty$. 

The proof of the equivalence between \eqref{eq:Fcm} and \eqref{eq:Fmc} by means of supremal convolution appears to be new. It is more direct that  the alternatives suggested in the literature and offers the additional advantage of working on the wider domain $L^0$.
\section{Monotone hull of the Sharpe ratio}
Let $\SR : L^0\to [-\infty,\infty]$ be the map
\begin{equation*}
\SR(X) = \left\{ 
\begin{array}{cl}
\E[X]/\sqrt{\Var(X)} & \text{for }X\in L^2 \\ 
-\infty & \text{otherwise} 
\end{array}\right.,
\end{equation*}
with the convention $1/0=\infty$, $-1/0=-\infty$, and $0/0=0$. 

Next, define $\SR_{\m} : L^0_+ - L_{+}^2 \to (-\infty,\infty]$ as the monotone hull of $\SR$, that is
$$\SR_{\m} = \sup_{Y\in L_{+}^{0}} \SR(X-Y). $$
\begin{proposition}\label{P:SRm}
Assume $X\in L^0_+ - L_{+}^2$ is such that $X^-\neq 0$ and 
\begin{equation}\label{eq:EX>0}
\lim_{K\to \infty}\E[X\wedge K]\in (0,\infty].
\end{equation} 
Then 
$$\sup_{\alpha\geq 0} F_{\m}(\alpha X)  $$
has a unique optimizer $\hat{\alpha}>0$ obtained as the unique solution of 
\begin{equation}\label{eq:FmFOC}
\E[X \indicator{\hat\alpha X\leq 1}] = \hat\alpha\E[X^2 \indicator{\hat\alpha X\leq 1}].
\end{equation}
Furthermore, 
\begin{equation}\label{eq:SRm2}
\SR_{\m}(X) = \SR((\hat\alpha X)\wedge 1)=\SR(X\wedge \hat{\alpha}^{-1})=\max_{K>0}\SR(X\wedge K).
\end{equation}
\end{proposition}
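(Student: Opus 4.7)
My plan is to split the proof into two parts: first, I establish that $\alpha\mapsto g(\alpha):=F_{\m}(\alpha X)$ has a unique stationary point $\hat\alpha>0$ satisfying \eqref{eq:FmFOC}; second, I identify $\SR_{\m}(X)$ with $\SR(W^*)$, where $W^*:=X\wedge\hat\alpha^{-1}$, via a Fenchel-style comparison between $F_{\m}$ and $\SR$.

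For the first step, concavity of $g$ on $[0,\infty)$ is inherited from that of $F_{\m}$. Since $U_{\m}'(y)=(1-y)^+$, differentiation under the expectation (justified on every $[\alpha_0,\alpha_1]\subset(0,\infty)$ using $X^-\in L^2$) yields
$$g'(\alpha)=\E[X(1-\alpha X)\indicator{\{\alpha X\leq 1\}}]=-\E[X^-]-\alpha\E[(X^-)^2]+\E[X(1-\alpha X)\indicator{\{0<X\leq 1/\alpha\}}],$$
so the stationarity equation $g'(\hat\alpha)=0$ is precisely \eqref{eq:FmFOC}. A monotone-convergence argument---as $\alpha\downarrow 0$ the integrand $X(1-\alpha X)\indicator{\{0<X\leq 1/\alpha\}}$ increases pointwise to $X^+$---combined with $\alpha\E[(X^-)^2]\to 0$ gives $g'(0^+)=\E[X]\in(0,\infty]$ by virtue of \eqref{eq:EX>0}. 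The crude bound $g'(\alpha)\leq 1/\alpha-\E[X^-]-\alpha\E[(X^-)^2]$ tends to $-\infty$ as $\alpha\uparrow\infty$ thanks to $X^-\neq 0$. Since the term $-\alpha\E[(X^-)^2]$ alone makes $g'$ strictly decreasing on $(0,\infty)$, there is a unique $\hat\alpha>0$ with $g'(\hat\alpha)=0$.

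For the second step, set $Y^*:=(\hat\alpha X)\wedge 1$, so that $\hat\alpha W^*=Y^*$. Multiplying \eqref{eq:FmFOC} by $\hat\alpha$ and regrouping the integrands on $\{\hat\alpha X\leq 1\}$ and its complement yields the pivotal identity $\E[Y^*]=\E[Y^{*2}]$, whence $F_{\m}(\hat\alpha X)=F(Y^*)=\E[Y^*]/2$. Scale invariance of $\SR$ gives the first equality in \eqref{eq:SRm2}. The reparametrization $Z=X-Y$ rewrites $\SR_{\m}(X)=\sup\{\SR(Z):Z\in L^2,\ Z\leq X\}$. The key tool is the tautology $F(\lambda Z)\leq F_{\m}(\lambda X)$ for every $\lambda\geq 0$ and $Z\leq X$, obtained from $\lambda Z=\lambda X-\lambda(X-Z)$ with $\lambda(X-Z)\in L^0_+$. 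Taking $\sup_{\lambda\geq 0}$ on both sides gives, for any $Z\in L^2$ with $\E[Z]>0$,
$$\frac{\SR(Z)^2}{\SR(Z)^2+1}=\frac{\E[Z]^2}{\E[Z^2]}\leq 2F_{\m}(\hat\alpha X)=\E[Y^*].$$
The left-hand side is an increasing transform of $\SR(Z)^+$, so this is a uniform upper bound on $\SR(Z)$. Substituting $Z=W^*$ and using $\hat\alpha W^*=Y^*$ together with $\E[Y^*]=\E[Y^{*2}]$ verifies equality, so $\SR_{\m}(X)=\SR(W^*)$. The same argument restricted to truncations $X\wedge K$ simultaneously identifies $\max_{K>0}\SR(X\wedge K)=\SR(W^*)$.

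The main analytical obstacle is the boundary behaviour of $g'$ when $X\notin L^1$ (so $\E[X^+]=\infty$ is permitted). This rules out naive dominated convergence on the positive part and forces the monotone-convergence computation sketched above; the hypothesis \eqref{eq:EX>0} enters precisely here. Once this differentiability is in place, the identification of $\SR_{\m}(X)$ reduces to the equality case in the elementary Fenchel inequality $F(\lambda Z)\leq F_{\m}(\lambda X)$.
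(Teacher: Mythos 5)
Your proposal is correct and follows essentially the same route as the paper: first the strictly decreasing derivative $\alpha\mapsto\E[X(1-\alpha X)\indicator{\alpha X\le 1}]$ yields the unique root \eqref{eq:FmFOC}, and then the link between $\sup_{\lambda\ge 0}F(\lambda Z)$ and $\SR(Z)$, combined with the monotone-hull inequality $F(\lambda Z)\le F_{\m}(\lambda X)$ for $Z\le X$ (the paper phrases this as an interchange of the two suprema followed by applying $g^{-1}$), identifies $\SR_{\m}(X)=\SR(X\wedge\hat\alpha^{-1})$ via the pivotal identity $\E[Y^*]=\E[Y^{*2}]$. Your treatment of the boundary behaviour when $X\notin L^1$ and your constant $\frac{\E[Z]^2}{\E[Z^2]}=2\max_{\lambda\ge 0}F(\lambda Z)$ are in fact slightly more careful than the paper's \eqref{eq:FtoSR}, which carries a harmless factor-of-two slip.
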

\begin{proof}
Define $f:\R_+\to \R$, $f(\alpha) = F_{\m}(\alpha X)$. Under our integrability assumptions on $X$ \eqref{eq:Fm} and dominated convergence yield
\begin{align*}
f'(\alpha)&=\E[X\indicator{\alpha X\leq 1}]-\E[\alpha X^2 \indicator{\alpha X\leq 1}],\\
f''_+(\alpha) &= -\E[X^2 \indicator{\alpha X< 1}],\\
f''_-(\alpha) &= -\E[X^2 \indicator{\alpha X\leq 1}].
\end{align*}
The derivative $f'$ is strictly decreasing on $\R_+$ with $f'>0$ near zero and $f'<0$ near infinity. As $f'$ is continuous on $(0,\infty)$ it has a unique root $f'(\hat\alpha)=0$ and by standard arguments this root is the global maximum of $f$ on $\R_+$ which proves 
\eqref{eq:FmFOC}. At the optimum the value function reads
\begin{equation}\label{eq:um2}
\sup_{\alpha\geq 0} F_{\m}(\alpha X)=F_{\m}(\hat\alpha X)=F((\hat\alpha X)\wedge 1).
\end{equation}

Now, thanks to Assumption \eqref{eq:EX>0} we obtain 
\begin{align} 
\SR_{\m}(X) = \sup_{Y\in L^0_+} \SR(X-Y)\geq \sup_{K\in\R}\SR(X\wedge K)>0.\label{eq:SRm1}
\end{align}
Next, by direct computation for $\E[Z]\geq 0$ and by Jensen's inequality for $\E[Z]<0$ one obtains for any $Z\in L^2$  
\begin{equation}\label{eq:FtoSR}
\max_{\alpha\geq 0}F(\alpha Z) = 1-(1+(\SR(Z)\vee 0)^2)^{-1} = g(\SR(Z)\vee 0).
\end{equation}
Observe that $g:z\mapsto 1-(1+z^2)^{-1}$ is strictly increasing on $\R_+$ with a continuous strictly increasing inverse function $g^{-1}:[0,1) \to \R_+$
\begin{equation}\label{ginv}
g^{-1}(y)=\sqrt{(1-y)^{-1}-1}.
\end{equation}
Therefore the left-hand side of \eqref{eq:FtoSR} uniquely identifies $\SR(Z)$ if $\E[Z]\geq 0$.

Apply this observation to \eqref{eq:SRm1} to obtain
\begin{align}
\SR_{\m}(X)&= \sup_{Y\in L^0_+}g^{-1}\left(\max_{\alpha\geq 0}F(\alpha(X-Y))\right) 
= g^{-1}\left(\sup_{\alpha\geq 0}\sup_{Y\in L^0_+}F(\alpha(X-Y))\right)\notag\\
&= g^{-1}\left(\sup_{\alpha\geq 0}\sup_{Y\in L^0_+}F(\alpha X-Y)\right)= g^{-1}\left(\max_{\alpha\geq 0}F_{\m}(\alpha X)\right),\label{eq:SRm3}
\end{align}
where the last equality follows from \eqref{eq:181213.1} and \eqref{eq:um2}. Observe that \eqref{eq:FmFOC} implies
$$ \E\left[(\hat\alpha X)\wedge 1\right] = \E\left[\left((\hat\alpha X)\wedge 1\right)^2\right],$$ 
which in turn gives 
$$ F((\hat\alpha X)\wedge 1) = g(\SR((\hat\alpha X)\wedge 1)).$$
This, \eqref{eq:um2}, and \eqref{eq:SRm3} proves the first equality in \eqref{eq:SRm2}. The second equality follows from homogeneity of the Sharpe ratio and the last from the inequality \eqref{eq:SRm1}.
\end{proof} 
\begin{remark}
Identity \eqref{eq:SRm2} shows that the monotone Sharpe ratio $\SR_{\m}$ is equal to the `arbitrage-adjusted Sharpe ratio' of \citet{cerny.03} for investment opportunities with positive mean and non-zero downside in $L^2$. We also remark that the Sharpe ratio bound in the good-deal pricing methodology of \citet{cochrane.saa-requejo.00} is in reality an upper bound on the monotone Sharpe ratio $\SR_{\m}$.
\end{remark}
\section{Optimal MMV investment and free cash-flow streams}
Denote the optimal strategies from Theorem~\ref{T:1} by $\hat{\vartheta}^x\in\scr{A}$ and $\hat{\vartheta}^x_{\m}\in\scr{A}_{\m}$, respectively. Using the relations (\ref{eq: CD}) we now
study the optimal portfolio allocation for monotone mean--variance preferences
$$u_{\MMV}(0)=\sup_{\vartheta \in \scr{T}} F_{\MMV}(\vartheta \cdot S_{T}).$$
Observe that due to Assumption~\ref{A:M} and Theorem~\ref{T:1} one has
$$ 0\leq u(0)\leq u_{\m}(0)<\frac{1}{2}.$$
\begin{theorem}\label{T:2}
Under Assumptions~\ref{A:S} and \ref{A:M} one has
\begin{equation} \label{eq:uMMV}
 u_{\MMV}(0)=\max_{\vartheta \in \scr{A}_{\m}} F_{\MMV}(\vartheta \cdot S_{T}) 
= \left((1-2u_{\m}(0))^{-1}-1  \right)/2,
\end{equation}
The optimal monotone mean--variance trading strategy in \eqref{eq:uMMV} equals 
\begin{equation}\label{eq:thetaMMV}
\hat{\vartheta}_{\MMV}^0=(1-2u_{\m}(0))^{-1}\hat{\vartheta}_{\m}^0.
\end{equation}
\end{theorem}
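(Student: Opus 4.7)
The plan is to travel the clockwise route of diagram \eqref{eq: CD}, i.e., to use the representation $F_{\MMV}=F_{\m}\sq C$ derived in Section~3 rather than the classical $F_{\MMV}=F_{\MV}\sq D$. Plugging $X=\vartheta\cdot S_T$ into the former and freely swapping the two iterated suprema (a purely formal step, since both are ordinary suprema), one obtains
\begin{equation*}
u_{\MMV}(0)=\sup_{\vartheta\in\scr{T}}\sup_{c\in\R}\{F_{\m}(-c+\vartheta\cdot S_T)+c\}=\sup_{c\in\R}\{u_{\m}(-c)+c\}.
\end{equation*}
This reduces the MMV problem to a one-dimensional concave program built on top of the truncated-utility value function already handled by Theorem~\ref{T:1}.

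Set $m:=\min_{\Qu\in\mathcal{M}^a_2}\E[(\d\Qu/\d\P)^2]\ge 1$. Theorem~\ref{T:1} gives $u_{\m}(0)=(m-1)/(2m)$, equivalently $m=(1-2u_{\m}(0))^{-1}$, and for $c\ge -1$
\begin{equation*}
u_{\m}(-c)+c=\tfrac{1}{2}-\tfrac{(1+c)^2}{2m}+c
\end{equation*}
is strictly concave in $c$, uniquely maximized at $\hat c=m-1\ge 0$ with value $(m-1)/2$. Rewriting $m$ in terms of $u_{\m}(0)$ then yields the claimed formula $u_{\MMV}(0)=((1-2u_{\m}(0))^{-1}-1)/2$.

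The remaining step, and the one I expect to carry the most content, is to identify the optimal portfolio. I would first establish a linear scaling property of the truncated-utility optimizer: $\hat\vartheta_{\m}^x=(1-x)\hat\vartheta_{\m}^0$ whenever $x\le 1$. The trick is the pointwise identity $(x+(1-x)Z)\wedge 1=x+(1-x)(Z\wedge 1)$, valid precisely when $x\le 1$, which after a short expansion gives
\begin{equation*}
\E[U_{\m}(x+(1-x)\hat\vartheta_{\m}^0\cdot S_T)]=x-\tfrac{x^2}{2}+(1-x)^2 u_{\m}(0)=u_{\m}(x),
\end{equation*}
so the scaled strategy is indeed optimal (and admissible for $F_{\m}$ because $\scr{A}_{\m}$ is stable under non-negative scalars). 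Specializing to $x=-\hat c=1-m$ produces the candidate MMV optimizer $m\,\hat\vartheta_{\m}^0=(1-2u_{\m}(0))^{-1}\hat\vartheta_{\m}^0\in\scr{A}_{\m}$; this attainment simultaneously confirms that the supremum in \eqref{eq:uMMV} is realized as a maximum over $\scr{A}_{\m}$. The only subtlety I anticipate is the verification that the affine identity for $U_{\m}$ transfers cleanly to the expected-utility level under the mild $\sigma$-local integrability in Assumption~\ref{A:S}, but this is purely a dominated-convergence check of the same flavor used in Proposition~\ref{P:SRm}.
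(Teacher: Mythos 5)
Your proposal is correct and follows essentially the same route as the paper: the clockwise decomposition $F_{\MMV}=F_{\m}\sq C$ reducing the problem to $\sup_{c}\{u_{\m}(-c)+c\}$, the scaling relations $\hat\vartheta^x_{\m}=(1-x)^+\hat\vartheta^0_{\m}$ and $u_{\m}(x)=\tfrac12+((1-x)^+)^2(u_{\m}(0)-\tfrac12)$, and the evaluation at $x=-\hat c$. The only cosmetic difference is that you spell out the affine identity $(x+(1-x)Z)\wedge1=x+(1-x)(Z\wedge1)$ behind the self-similarity of $U_{\m}$, which the paper delegates to a citation of \citet{cerny.al.12}.
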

\begin{proof}
From (\ref{eq: CD}) and Theorem~\ref{T:1} we obtain
\begin{equation}\label{eq:ucm}
u_{\MMV}(0)=\sup_{c_{\m}\in \R}\sup_{\vartheta \in \scr{A}_{\m}}\E[U_{\m}(\vartheta \cdot S_{T}-c_{\m})+c_{\m}]
=\sup_{c_{\m}\in \R}u_{\m}(-c_{\m})+c_{\m}.
\end{equation}
Due to the self-similarity of $U_{\m}$ and the cone property of $\scr{A}_{\m}$ we have, just as in \citet{cerny.al.12},
\begin{align}
\hat{\vartheta}^x_{\m} &= (1-x)^+\hat{\vartheta}^0_{\m},\label{eq:thetam}\\
u_{\m}(x) &= 1/2 + ((1-x)^+)^2(u_{\m}(0)-1/2), \label{eq:um}
\end{align}
where \eqref{eq:um} follows by substituting \eqref{eq:thetam} into \eqref{eq:Fm}. 

Now substitute \eqref{eq:um} into \eqref{eq:ucm} 
and optimize over $c_{\m}$ to obtain 
\begin{align}\label{eq:cm}
\hat{c}_{\m}&=(1-2u_{\m}(0))^{-1}-1
\end{align}
together with \eqref{eq:uMMV}. Formula \eqref{eq:thetaMMV} now follows from \eqref{eq:thetam} with $x=-\hat{c}_{\m}$.
\end{proof}
\begin{remark}
Making use of the explicit formula for the truncated quadratic utility in \eqref{eq:Um} the first-order condition for the optimization over $c_{\m}$ in \eqref{eq:ucm} reads
$$ \E[(\vartheta\cdot S_T-\hat{c}_{\m}-1)^-]=1.$$
At the same time, \eqref{eq:Fcm} implies that 
$$ \hat\vartheta^0_{\MMV} = \arg\max_{\vartheta\in\mathcal{A}_{\m}}F_{\MV}\left((\vartheta\cdot S_T)\wedge (1+\hat{c}_{\m})\right).$$
This provides an alternative characterization of the optimal strategy, obtained previously  in \citet[Theorem~4.1]{maccheroni.al.09} for a one-period model.
\end{remark}
Mirroring the proof of Theorem~\ref{T:2} with standard MV preferences one obtains an analogous link between $u(0)$ and $u_{\MV}(0)$, 
\begin{equation}\label{eq:uMV}
u_{\MV}(0)=\sup_{\vartheta \in \scr{T}} F_{\MV}(\vartheta \cdot S_{T})=\left((1-2u(0))^{-1}-1  \right)/2.
\end{equation}
\begin{definition}
We say that one can extract a free cash-flow stream if there is a tame strategy $\vartheta\in\mathcal{T}$ and a non-negative random variable $Z\in L^{0}$ with $P(Z>0)>0$ such that 
$$F_{\MV}(\vartheta \cdot S_{T}-Z)\ge u_{\MV}(0).$$
\end{definition}
We are now in a position to formulate the main result on the availability of free cash-flow streams. 
To this end recall the function $g^{-1}$ in \eqref{ginv}.
\begin{theorem}\label{T:3} Under Assumptions~\ref{A:S} and \ref{A:M} the following statements hold.
\begin{enumerate}[label=(\arabic*),ref=(\arabic*)]
\item\label{item:1} The highest Sharpe ratio attainable by a tame zero-cost strategy is arbitrarily close to and does not exceed $\sqrt{2u_{\MV}(0)}$. This Sharpe ratio is attained but not exceeded in class $\scr{A}$ by the strategy $\hat\vartheta^0$,
\begin{equation} \label{eq:SR4}
\sup_{\vartheta\in\mathcal{T}}\SR(\vartheta\cdot S_T) 
= \max_{\vartheta\in\mathcal{A}}\SR(\vartheta\cdot S_T)
=g^{-1}\left(\max_{\vartheta\in\mathcal{A}}F(\vartheta\cdot S_T)\right)
=\SR(\hat\vartheta^0\cdot S_T).
\end{equation}
\item\label{item:2} The highest Sharpe ratio attainable by a tame zero-cost strategy after extracting a non-negative cash flow is arbitrarily close to and does not exceed $\sqrt{2u_{\MMV}(0)}$. This Sharpe ratio is attained but not exceeded in class $\scr{A}_{\m}$ by the strategy $\hat\vartheta_{\m}^0$,
\begin{equation}\label{eq:SRm4}
\sup_{\vartheta\in\mathcal{T}}\SR_{\m}(\vartheta\cdot S_T) 
= \max_{\vartheta\in\mathcal{A}_{\m}}\SR_{\m}(\vartheta\cdot S_T)
=g^{-1}\left(\max_{\vartheta\in\mathcal{A}_{\m}}F_{\m}(\vartheta\cdot S_T)\right)
=\SR_{\m}(\hat\vartheta_{\m}^0\cdot S_T).
\end{equation}
\item Provided $u_{\m}>0$ the maximal monotone Sharpe ratio in item~\ref{item:2} satisfies 
$$ \SR_{\m}(\hat\vartheta_{\m}^0\cdot S_T) = \SR((\hat{\vartheta}_{\m}^0\cdot S_T)\wedge 1).$$ 
The corresponding free terminal cash flow equals $$(1-\hat{\vartheta}_{\m}^0\cdot S_T)^+.$$
\item\label{item:4} By definition $u_{\MV}(0)\leq u_{\MMV}(0)$. The following are equivalent:
	\begin{enumerate}[label=(\alph*),ref=(\alph*)]
		\item\label{item:a} $u_{\MV}(0) = u_{\MMV}(0)$;
		\item\label{item:b} $u(0)=u_{\m}(0)$;
		\item\label{item:c} $\hat\vartheta^0\cdot S_T \leq 1$;
		\item\label{item:d} the variance-optimal $\sigma$--martingale measure is not signed.
	\end{enumerate}
\item\label{item:5} If there is $\overline\vartheta\in\mathcal{A}_{\m}$ and $0\neq Y\in L^0_+$ such that 
$\SR(\overline\vartheta\cdot S_T -Y) = \SR(\hat\vartheta^0\cdot S_T)$ then $\SR_{\m}(\hat\vartheta^0_{\m}\cdot S_T) > \SR(\hat\vartheta^0\cdot S_T)$.
\end{enumerate}
\end{theorem}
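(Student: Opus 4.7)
The plan is to leverage items \ref{item:1} and \ref{item:2}, together with the strict monotonicity of $g^{-1}$, in order to reduce the strict inequality $\SR_{\m}(\hat\vartheta^0_{\m}\cdot S_T)>\SR(\hat\vartheta^0\cdot S_T)$ to $u_{\m}(0)>u(0)$. By item~\ref{item:4} the negation $u_{\m}(0)=u(0)$ is equivalent to $\hat\vartheta^0\cdot S_T\leq 1$ a.s., so I shall argue by contradiction and assume this equality together with the hypothesis of~(5).

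Set $\overline W=\overline\vartheta\cdot S_T$, $V=\overline W-Y$ and $W_0=\hat\vartheta^0\cdot S_T$. Let $\mathcal{K}=\{\vartheta\cdot S_T:\vartheta\in\scr{A}_{\m}\}-L^0_+$, a convex cone in $L^{0}$ containing $V$, $W_0$ and every $\beta V$ for $\beta>0$. Combining diagram \eqref{eq: CD} with Theorem~\ref{T:1} yields $\sup_{\mathcal{K}}F=u_{\m}(0)$, and under the contradiction hypothesis this value equals $u(0)=F(W_0)$. Since $F$ is strictly concave on its effective domain $L^{2}$, the random variable $W_0$ is the \emph{unique} maximizer of $F$ on $\mathcal{K}$. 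Meanwhile, the identity $\SR(V)=\SR(W_0)$ together with \eqref{eq:FtoSR} gives $\max_{\beta>0}F(\beta V)=u(0)$; since $\beta V\in\mathcal{K}$ for every $\beta>0$, the uniqueness just established forces $\beta^{*}V=W_0$ at the optimizer $\beta^{*}>0$, so $\overline W=W_0/\beta^{*}+Y$.

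The closing step uses the supermartingale/martingale constraints. For any $\Qu\in\mathcal{M}^{a}_2$, admissibility of $\overline\vartheta$ for $F_{\m}$ gives $\E_{\Qu}[\overline W]\leq 0$, while $\hat\vartheta^0\in\scr{A}$ combined with $\mathcal{M}^{a}_2\subseteq\mathcal{M}^{s}_2$ gives $\E_{\Qu}[W_0]=0$. The representation $\overline W=W_0/\beta^{*}+Y$ then forces $\E_{\Qu}[Y]\leq 0$, and $Y\in L^{0}_+$ together with $\Qu\ll\P$ yields $\E_{\Qu}[Y]\geq 0$, hence $\E_{\Qu}[Y]=0$. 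Selecting $\Qu\in\mathcal{M}^{e}_2$, which is non-empty by Assumption~\ref{A:M}, and invoking $\Qu\sim\P$, deliver $Y=0$ $\P$-a.s., contradicting $Y\neq 0$. The main obstacle I anticipate is the degenerate case $u(0)=0$, where $W_0=0$ and the scaling $\beta^{*}$ may vanish; there $\P$ itself is the variance-optimal $\sigma$-martingale measure, and taking $\Qu=\P$ directly---together with $\SR(V)=0$, which forces $\E[V]\leq 0$---still produces $\E[Y]=0$ and the contradiction.
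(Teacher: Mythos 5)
Your proposal proves only item~\ref{item:5} of the theorem: items \ref{item:1}--\ref{item:4} are invoked as known facts (you use \ref{item:1}--\ref{item:2} to translate the strict Sharpe-ratio inequality into $u_{\m}(0)>u(0)$, and you cite \ref{item:4} for the equivalence with $\hat\vartheta^0\cdot S_T\leq 1$), but they are nowhere established. Relative to the full statement this is a genuine gap: in the paper, \ref{item:1}--\ref{item:2} follow from \eqref{eq:FtoSR} and \eqref{eq:SRm3} applied to $X=\vartheta\cdot S_T$ together with the cone property of $\scr{T}$, $\scr{A}$, $\scr{A}_{\m}$ and the identities \eqref{eq:uMMV}, \eqref{eq:uMV}; item (3) rests on Proposition~\ref{P:SRm} with $\hat\alpha=1$; and item \ref{item:4} combines \eqref{eq:uMMV}--\eqref{eq:uMV} with the dual characterization in Theorem~\ref{T:1} and with results of \citet{cerny.kallsen.07}. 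None of this appears in your write-up, so as a proof of the theorem it is incomplete.

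The argument you do give for item~\ref{item:5} is correct and takes a genuinely different, and in fact leaner, route than the paper. You exploit the strict concavity of $F$ on $L^2$ and the convexity of the monotonized cone $\mathcal{K}=\{\vartheta\cdot S_T:\vartheta\in\scr{A}_{\m}\}-L^0_+$ to conclude that, under the contradiction hypothesis $u(0)=u_{\m}(0)$, the maximizer of $F$ on $\mathcal{K}$ is unique and equals $W_0=\hat\vartheta^0\cdot S_T$; this forces $\beta^*(\overline\vartheta\cdot S_T-Y)=W_0$, after which the $\Qu$-martingale/supermartingale constraints for $\Qu\in\mathcal{M}^e_2$ give $\E^{\Qu}[Y]=0$ and hence $Y=0$. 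The paper instead routes the contradiction through Proposition~\ref{P:SRm}, the coincidence of the dual optimizers for $F$ and $F_{\m}$, and a Fenchel-inequality comparison of $\hat\vartheta^0_{\m}\cdot S_T$ with $\hat\vartheta^0\cdot S_T$ to manufacture a nonzero element of $L^0_+$ with nonpositive $\Qu$-expectation; your uniqueness argument bypasses that duality machinery and the delicate step \eqref{over bliss} entirely, while ending with the same punchline. One small slip: in the degenerate case $u(0)=0$ you assert that $\SR(V)=0$ forces $\E[V]\leq 0$, whereas the inequality you need for $\E[Y]=\E[\overline W]-\E[V]\leq 0$ is $\E[V]\geq 0$; this is harmless because under the paper's conventions $\SR(V)=0$ in fact forces $\E[V]=0$, but the direction should be corrected.
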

\begin{proof} Due to Jensen's inequality one can discard strategies with negative mean wealth. The proofs for zero mean wealth strategies are trivial hence we detail only the case where there exist strategies with positive (possibly infinite) mean and restrict attention only to those without further mention.

(1--2) Exploit the identities \eqref{eq:FtoSR} and \eqref{eq:SRm3} with $X=\vartheta\cdot S_T$. Observe that $\scr{T}$, $\scr{A}$, 
and $\scr{A}_{\m}$ are cones so multiplication by $\alpha > 0$ maps these sets onto themselves. This yields \eqref{eq:SR4} 
and \eqref{eq:SRm4}. In \eqref{eq:SRm4} apply \eqref{eq:uMMV} to obtain $\SR_{\m}(\hat\vartheta^0_{\m}\cdot S_T)=g^{-1}(u_{\m}(0))=\sqrt{2u_{\MMV}(0)}$. Proceed analogously in \eqref{eq:SR4} to obtain $\SR(\hat\vartheta^0\cdot S_T)=\sqrt{2u_{\MV}(0)}$.

(3) The condition $u_{\m}>0$ excludes the case $E[\vartheta^0_{\m}\cdot S_T]\leq 0$ therefore we can apply \eqref{eq:SRm2} with $X=\vartheta^0_{\m}\cdot S_T$.  The cone property of $\mathcal{A}_{\m}$, the optimality of $\vartheta^0_{\m}$, and the uniqueness of $\hat\alpha$ yield $\hat\alpha =1$ in \eqref{eq:SRm2} and the statement follows.

(4) Identities \eqref{eq:uMMV} and \eqref{eq:uMV} show equivalence between \ref{item:a} and \ref{item:b}. Due to the inclusion 
$\mathcal{M}_2^a\subseteq \mathcal{M}_2^s$ and the uniqueness of the variance-optimal measure, the duality in Theorem~\ref{T:1} 
implies $u(0)=u_{\m}(0)$ if and only if the dual optimizer in~\eqref{ux}, that is the variance-optimal measure, is in $\mathcal{M}^a_2$. 
This proves the equivalence between \ref{item:b} and \ref{item:d}. The equivalence between \ref{item:c} and \ref{item:d} follows from equation (3.16) and Proposition~3.13 in \citet{cerny.kallsen.07}.

(5) Argue by contradiction, assuming $u(0)=u_{\m}(0)$. Because $\SR(\hat\vartheta^0\cdot S_T)\geq 0$ we obtain $\E[\bar\vartheta\cdot S_T]\in(0,\infty]$. Now \eqref{eq:SRm2} with $X= \bar\vartheta\cdot S_T$ and hypothesis yield $\hat\alpha>0$ such that
$$ \SR_{\m}(X)=\SR(X-Y)=\SR(\hat\alpha(X-Y))=\SR((\hat\alpha X)\wedge 1)=g^{-1}\left(F_{\m}(\hat\alpha X)\right).$$
From here and \eqref{eq:SRm4} we conclude $\hat\vartheta_{\m}^0
=\hat\alpha \bar\vartheta\in \arg\max_{\vartheta\in\mathcal{A}_{\m}} F_{\m}(\vartheta\cdot S_T)$ has the property
\begin{equation}\label{over bliss}
\P(\hat\vartheta_{\m}^0\cdot S_T>1)>0.
\end{equation}
Because $u(0)=u_{\m}(0)$ the dual optimizers for $F$ and $F_{\m}$ must coincide by the same argument as in  item~\ref{item:4}. Now, Fenchel inequality implies that 
\begin{equation}\label{fenchel}
\hat\vartheta_{\m}^0\cdot S_T = \hat\vartheta^0\cdot S_T \text{ on the event } (\hat\vartheta^0\cdot S_T < 1),
\end{equation}
see also Theorem~4.10 (a)(iii) in \citet{biagini.cerny.11}. Equality of value functions also implies by item \ref{item:4}
$\hat\vartheta^0\cdot S_T \leq 1$. This, \eqref{over bliss}, and \eqref{fenchel} yield
\begin{equation}\label{arbitrage}
0\neq \hat\vartheta_{\m}^0\cdot S_T - \hat\vartheta^0\cdot S_T \in L^0_+.
\end{equation} 
Denote by $\Qu$ the equivalent measure from Assumption~\ref{A:M}. Admissibility in Definition~\ref{D:1} requires
$$\E^{\Qu}[\hat\vartheta^0\cdot S_T]=0\geq \E^{\Qu}[\hat\vartheta^0_{\m}\cdot S_T]$$
which together with $\Qu\sim\P$ contradicts the inequality in \eqref{arbitrage}.
\end{proof}
The previous theorem shows that one cannot extract a free cash-flow stream (FCFS) in the market $\mathcal{A}_{\m}$ \emph{if and only if} one cannot extract an FCFS from the mean--variance efficient portfolio $\hat\vartheta^0\in\mathcal{A}$. This is not entirely obvious in advance because first $\mathcal{A}_{\m}$ is a strict superset of $\mathcal{A}$ in general, and second in principle there could have been MV inefficient allocations in $\mathcal{A}$ that might have become very MV efficient after an FCFS extraction. 

\citet[Section~4]{cui.al.12} and \citet{trybula.zawisza.19} observe that specific diffusion models do not allow a free cash-flow stream. The next corollary identifies two very generic situations where an FCFS is not available, cf. also \citet[Theorem~3.3]{bauerle.grether.15}.
\begin{corollary}
Assume~\ref{A:S} and \ref{A:M}. If $\mathcal{M}^s_2$ is a singleton or if the price process $S$ is continuous then the extraction of a non-zero free cash-flow stream inevitably leads to a lower maximal Sharpe ratio over all left-over wealth distributions.
\end{corollary}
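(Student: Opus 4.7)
My plan is to reduce both hypotheses to condition~\ref{item:d} of Theorem~\ref{T:3}\ref{item:4}, yielding $u_{\MV}(0)=u_{\MMV}(0)$, and then to convert this equality into the claimed strict Sharpe loss through a short no-arbitrage argument.

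When $\mathcal{M}^s_2$ is a singleton, Assumption~\ref{A:M} places an equivalent probability measure into the chain $\mathcal{M}^e_2\subseteq\mathcal{M}^a_2\subseteq\mathcal{M}^s_2$, collapsing all three sets to the same singleton; the unique, and therefore variance-optimal, element is then a probability measure equivalent to $\P$, in particular not signed. For continuous $S$ I would invoke the classical result that the variance-optimal $\sigma$-martingale measure is equivalent to $\P$, for instance via \citet[Proposition~3.13]{cerny.kallsen.07} together with the Delbaen--Schachermayer argument exploiting continuity of the density process to exclude a negative part. Theorem~\ref{T:3}\ref{item:4} then delivers $u_{\MV}(0)=u_{\MMV}(0)$ in both cases.

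Given this equality, I argue by contradiction. Suppose $\vartheta\in\scr{T}$ and $Z\in L^0_+$ with $\P(Z>0)>0$ satisfy $\SR(\vartheta\cdot S_T-Z)\geq\sqrt{2u_{\MV}(0)}$. Setting $\alpha^{\ast}=\E[\vartheta\cdot S_T-Z]/\Var(\vartheta\cdot S_T-Z)>0$ and applying~\eqref{eq:FtoSR}, the random variable $W^{\ast}:=\alpha^{\ast}\vartheta\cdot S_T-\alpha^{\ast}Z$ lies in the MMV feasibility set $\{\vartheta'\cdot S_T-Y:\vartheta'\in\scr{T},\,Y\in L^0_+\}\cap L^2$ and attains $F_{\MV}(W^{\ast})=u_{\MMV}(0)$. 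The elementary identity $F_{\MV}(\lambda X_1+(1-\lambda)X_2)=\lambda F_{\MV}(X_1)+(1-\lambda)F_{\MV}(X_2)+\lambda(1-\lambda)\Var(X_1-X_2)/2$ forces any two $F_{\MV}$-maximizers on this convex set to differ by a constant, and cash invariance $F_{\MV}(X+c)=F_{\MV}(X)+c$ rules out a non-zero constant. Comparing $W^{\ast}$ with the maximizer $\hat\vartheta^0\cdot S_T$ (corresponding to $\vartheta'=\hat\vartheta^0$, $Y=0$) yields $W^{\ast}=\hat\vartheta^0\cdot S_T$ almost surely.

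Finally, $(\alpha^{\ast}\vartheta-\hat\vartheta^0)\cdot S_T=\alpha^{\ast}Z$ is a non-zero element of $L^0_+$, while $\alpha^{\ast}\vartheta-\hat\vartheta^0\in\scr{T}$ by linearity of $\scr{T}$. For any $\Qu\in\mathcal{M}^e_2$ supplied by Assumption~\ref{A:M}, the tameness bound $\sup_t|\zeta\cdot S_t|\in L^2(\P)\subseteq L^1(\Qu)$ (Cauchy--Schwarz) upgrades the $\Qu$-$\sigma$-martingale $\zeta\cdot S$ to a uniformly integrable true $\Qu$-martingale for every tame $\zeta$, so $\E^{\Qu}[\alpha^{\ast}Z]=0$, contradicting $\Qu\sim\P$ and $\P(Z>0)>0$. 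The main technical obstacle I anticipate is producing a clean citation for the continuous-$S$ statement in step two; everything else is routine given Theorem~\ref{T:3} and the semimartingale background already developed.
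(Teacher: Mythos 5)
Your reduction to condition \ref{item:d} of Theorem~\ref{T:3}\ref{item:4} is exactly the paper's first step (the paper cites Theorem~1.3 of \citet{delbaen.schachermayer.96b} for the continuous case, which is the clean reference you were looking for). Where you genuinely diverge is the second step: the paper simply invokes item~\ref{item:5} of Theorem~\ref{T:3}, whose proof runs through the monotone Sharpe ratio machinery of Proposition~\ref{P:SRm}, the Fenchel inequality, and the coincidence of dual optimizers. You instead give a self-contained argument from the parallelogram-type identity for $F_{\MV}$ plus cash invariance, forcing the scaled left-over wealth to coincide with the mean--variance optimal terminal wealth, and then close with the same no-arbitrage punchline ($\E^{\Qu}$ of a non-zero non-negative variable cannot vanish under $\Qu\sim\P$) that ends the paper's proof of item~\ref{item:5}. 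Your route is more elementary and arguably more transparent; the paper's buys a stronger standalone statement (item~\ref{item:5}) applicable beyond the corollary.

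Two technical slips need repair, though neither is fatal. First, the $F_{\MV}$-maximizer over zero-cost wealths is not $\hat\vartheta^0\cdot S_T$ but its rescaling $(1-2u(0))^{-1}\hat\vartheta^0\cdot S_T$ (cf.\ \eqref{eq:thetaMMV} and \eqref{eq:uMV}); since $\hat\vartheta^0$ maximizes $F$, not $F_{\MV}$, you must compare $W^{\ast}$ with the correctly scaled wealth --- the final contradiction survives because the difference of the two strategies still integrates to a non-zero element of $L^0_+$. Second, $\hat\vartheta^0$ lies in $\scr{A}$, which is not known to be a subset of $\scr{T}$, so ``$\alpha^{\ast}\vartheta-\hat\vartheta^0\in\scr{T}$ by linearity'' is unjustified, and your convex feasibility set built from $\scr{T}$ alone need not contain the maximizer. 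Both points are fixed by working in $\scr{A}_{\m}$: tame strategies are $\Qu$-martingales for every $\Qu\in\mathcal{M}^s_2$ by your own Cauchy--Schwarz domination argument, hence $\scr{T}\cup\scr{A}\subseteq\scr{A}_{\m}$, the set $\{\vartheta'\cdot S_T-Y\mid \vartheta'\in\scr{A}_{\m},\,Y\in L^0_+\}\cap L^2$ is convex with $F_{\MV}$ bounded there by $u_{\MMV}(0)$ thanks to Theorem~\ref{T:2}, and in the last step you can take $\Qu$-expectations of the two wealth terms separately rather than of their difference.
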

\begin{proof}
Under both hypotheses the variance-optimal measure is in $\mathcal{M}^e_2$; in the first case it follows by assumption and in the second it is the consequence of Theorem~1.3 in \citet{delbaen.schachermayer.96b}. Hence by item~\ref{item:4} of Theorem~\ref{T:3} 
$u_{\MV}(0)=u_{\MMV}(0)$ and by item \ref{item:5} any FCFS extraction must lead to a strictly lower maximal Sharpe ratio. 
\end{proof}

\end{document}